\title{\LARGE \bf
Data-Driven Robust Backward Reachable Sets for Set-Theoretic Model Predictive Control 
}
\author{Mehran Attar and Walter Lucia
	\thanks{This work was supported in part by the Natural Sciences and Engineering Research Council of Canada (NSERC).}%
	\thanks{Mehran Attar and Walter Lucia are with the Concordia Institute for Information Systems Engineering (CIISE), Concordia University, Montreal, QC, H3G 1M8, CANADA, {\tt\small mehran.attar@concordia.ca}, {\tt\small walter.lucia@concordia.ca}}%
}
\let\proof\relax
\theoremstyle{plain}
\newcommand{\rr}{\mathop{{\rm I}\mskip-4.0mu{\rm R}}\nolimits}
\theoremstyle{definition}
\newtheorem{lemma}{Lemma}
\newtheorem{proposition}{Proposition}
\theoremstyle{remark}
\newtheorem{remark}{\textbf{Remark}}
\theoremstyle{remark}
\newtheorem{assumption}{Assumption}
\newtheorem{Property}{\textbf{Property}}
\begin{document}

\maketitle
\thispagestyle{empty}
\pagestyle{empty}
\begin{abstract}
In this paper, we propose a novel approach for computing robust backward reachable sets from noisy data for unknown constrained linear systems subject to bounded disturbances. In particular, we develop an algorithm for obtaining zonotopic inner approximations that can be used for control purposes. It is shown that such sets, if built on an extended space including states and inputs, can be used to embed the system's one-step evolution in the computed extended regions. Such a result is then exploited to build a set-theoretic model predictive controller that, offline, builds a recursive family of robust data-driven reachable sets and, online, computes recursively admissible control actions without explicitly resorting to either a model of the system or the available data. The validity of the proposed data-driven solution is verified by means of a numerical simulation and its performance is contrasted with the model-based counterpart.
\end{abstract}
\section{Introduction}\label{sec:introduction}

Forward reachable sets (FRS) and backward reachable sets (BRS) are important set-theoretic concepts in control theory because they allow us to analyze and predict admissible behaviors of a system over time. Such concepts are particularly important for constrained safety-critical systems where it is imperative to ensure that given unsafe configurations 
(e.g., vehicle collisions) 
are not reached \cite{mitchell2007comparing,althoff2010reachability}.

In the literature, FRS and BRS have been largely employed to design robust and model predictive control strategies, see, e.g., \cite{bertsekas1972infinite,blanchini2008set,berberich2020data,alanwar2022robust,gupta2023computation,angeli2008ellipsoidal} and references therein.
FRS and BRS are typically represented by means of polytopes, ellipsoidal sets, or zonotopes, and each representation has its own benefits/drawbacks in terms of accuracy and computational complexity~\cite{blanchini2008set, angeli2008ellipsoidal, althoff2010reachability}. Polytopes are closed under Minkowski set sum/difference; consequently, for 
linear systems subject to polyhedral constraints and disturbances, polytopes allow the exact computation of FRSs and BRSs. However, if $N$-step FRS and BRS are of interest, then such a representation suffers from an increasing number of vertices that limits its use for low-dimensional systems~\cite{yang2021scalable}. Consequently, approximated FRS and BRS representations via ellipsoids or zonotopes are often used. The use of ellipsoidal shapes is very compact, but the resulting reachable sets might be conservative \cite{kurzhanski1997ellipsoidal}; on the other hand, zonotopes are interesting because they are closed under Minkowski sums and linear mappings, and they typically allow to achieve a good compromise between accuracy and compactness of representation \cite{scott2016constrained,althoff2015introduction,girard2005reachability}.
FRS and BRS can be exactly or approximately computed by resorting to a mathematical description of the plant's dynamics or resorting to a given collection of input-output data. Model-based reachability relies on a priori accurate knowledge of the system's dynamics and allows accurate predictions; in contrast, data-driven reachability is inferred from a set of observed system's trajectories, and it is used when the mathematical model is difficult to obtain or when the system's behavior is affected by unknown or uncertain factors such as disturbances \cite{alanwar2021data}.

In the literature, particular attention has been given to the problem of computing outer approximations of data-driven FRS for control purposes \cite{devonport2020data,alanwar2021data,alanwar2022robust}.
%
In \cite{devonport2020data},  the authors introduced two data-driven approaches for computing FRS with probabilistic guarantees. The first represents the reachability problem as a binary classification problem; the second uses a Monte Carlo sampling approach. In \cite{alanwar2021data}, starting from a collection of noisy input-state trajectories for an unknown linear system, a procedure to compute zonotopic outer approximations of the FRS is developed. Such a solution leverages the system descriptions proposed in \cite{de2019formulas} and \cite{willems2005note}, and it can be extended to compute data-driven FRS for polynomial and Lipschitz nonlinear systems. In  \cite{alanwar2022robust}, the FRS outer approximation developed in \cite{alanwar2021data} is leveraged to design a data-driven model predictive controller.
\subsection{Contribution}\label{sec:contribution}

To the best of the author's knowledge, the problem of computing robust BRS (also known as robust one-step controllable sets) from a collection of noisy input-state trajectories has not yet been explored in the literature. In this regard, this paper extends the data-driven approach developed in \cite{alanwar2021data} (computing over approximations of forward reachable sets) to compute data-driven zonotopic inner approximations of robust backward reachable sets for unknown linear systems subject to bounded state and input constraints as well as disturbances. We develop a novel augmented description of the robust one-step controllable sets, which is then used to obtain a novel data-driven implementation of the set-theoretic MPC controller developed in \cite{angeli2008ellipsoidal} (see Section \ref{sec:model_based_set_theoretic}).
A Matlab implementation of the here-developed algorithm is available at the following web-link: 
\href{https://github.com/PreCyseGroup/Data-Driven-ST-MPC}{https://github.com/PreCyseGroup/Data-Driven-ST-MPC}.


%
\subsection{Notation}

Given a matrix $M\in \rr^{n\times m},$ we denote its column vectors as $m^{(j)},$ $1\leq j\leq m,$ and the right pseudo inverse as $M^\dagger$.
Given a matrix  $M\in \rr^{n \times n-1}$, its $n-$dimensional cross-product is $CP_n(M):=[\text{det}(M^{[1]}), \cdots, (-1)^{j+1}\text{det}(M^{[j]}), \cdots, (-1)^{n+1}\text{det}(M^{[n]})]^T$ \cite{mortari1997n}, and $M^{[j]} \in \rr^{n-1 \times n-1}$ is the sub-matrix of $M$ where the $j-{th}$ row is removed. 
Moreover, $0_n \in \rr^n$ and $I_n\in \rr^{n\times n}$ indicate a vector of zeros and the identity matrix, respectively.
\section{preliminaries and problem formulation}\label{section:preliminaries_and_problem_formulation}

In this section, first, some preliminary definitions are provided. Then, the problem of interest is stated.

The following definitions are adapted from \cite{blanchini2008set,althoff2010computing, alanwar2022robust}:
\definition \label{def:mikowski_sum_and_difference}
Give two sets $\mathcal{S}_1 \subset \rr^s$ and $\mathcal{S}_2 \subset \rr^s$, the Minkowski/Pontryagin set sum (denoted as $\oplus$) and difference (denoted as $\ominus$) between $\mathcal{S}_1$ and $\mathcal{S}_2$ are defined as: 
\begin{equation}\label{eq:def_for_mink_sum_and_diff}
\begin{array}{rcl}
\mathcal{S}_1 \oplus \mathcal{S}_2 &=& \{s_1 + s_2: s_1 \in \mathcal{S}_1 , s_2 \in \mathcal{S}_2 \}\\
\mathcal{S}_1 \ominus \mathcal{S}_2 &=& \{s_1 \in \rr^s :s_1 + s_2 \in \mathcal{S}_1, \forall s_2 \in \mathcal{S}_2 \}
\end{array}
\end{equation}
\definition \label{def:polytope}
Given $q$ halfspaces, a  polytope $\mathcal{P}$ is defined as
\begin{equation}\label{eq:polytope_def}
    \mathcal{P} = \left\{ x \in \rr^n | Cx \leq d, C\in \rr^{q \times n},d\in \rr^{q \times 1}   \right \}   
\end{equation}
\definition \label{def:zonotope} 
Given a center vector $c \in \rr^n$ and $p \in \mathbb{N}$ generator vectors $g^{(i)}\in \rr^n$  collected in a matrix 
$G= \left[g^{(1)}\, \ldots,\, g^{(p)}\right]  \in \rr^{n \times p},$ referred to as the  generator matrix. Then, a zonotope is defined as (using the $\mathcal{G}-$representation)
\begin{equation}\label{eq:zonotope_def}
   \!\! \mathcal{Z}(c, G)\! =\! \! \left\{\! x\in\rr^n\!: x\! =\! c\! +\! \sum_{i=1}^{p} \beta^{(i)}g^{(i)}, -1\!\leq\! \beta^{(i)}\! \leq \! 1\! \right\}
\end{equation}
\lemma \it \label{def:half_space_zonotope}
\cite[Theorem 2]{althoff2015computing} Consider $\mathcal{Z}(c,G),$ with $p$ independent generators,  $c\in \rr^n,G\in \rr^{n\times p}$. Let $v={p\choose n-1}$ 
be the number
of combinations of $n-1$ distinct generators, 
and $\mathcal{I}_i=\{\delta_1^i,\ldots,\delta_{n-1}^i\}$ the column indices associated to each $i-th$ combination.
The polytopic $\mathcal{H}-$representation of $\mathcal{Z}(c,G)$ is
$\mathcal{Z}(c,G)=\{x\in \rr^n: Cx\leq d\},$ where 
\begin{equation}\label{eq:half_space_representation_zonotope}
\begin{array}{rcl}
    C&=&\begin{bmatrix}
       \Bar{C}\\ -\Bar{C}
    \end{bmatrix}, \Bar{C}=\begin{bmatrix}
       \Bar{C}_1\\ \vdots \\ \Bar{C}_{v}
    \end{bmatrix}, \Bar{C}_i = \frac{CP_n(G^{<\mathcal{I}_i>})^T}{\|CP_n(G^{<\mathcal{I}_i}>)\|_2},\\
    d &=& \begin{bmatrix}
       \Bar{C}c + \Delta d\\-\Bar{C}c + \Delta d
    \end{bmatrix}, \displaystyle \Delta d = \sum_{v=1}^{p}|\Bar{C}g^{(v)}|.
\end{array}
\end{equation}
and $G^{<\mathcal{I}_i>}$ contains the column in $G$ specified by $\mathcal{I}_i.$
\definition \label{def:matrix_zonotope} 
Given a center matrix $C \in \rr^{n \times p}$ and $q\in \mathbb{N}$ generator matrices $G_M^{(i)}\in \rr^{n\times p}$  collected in a matrix  $G_{M} = \left[G^{(1)}_{M},\, \ldots,\, G^{(q)}_{M}\right]
\in \rr^{n \times (pq)}$. Then, a matrix zonotope is 
\begin{equation}\label{eq:matrix_zonotope}
	\begin{array}{rc}
		 \mathcal{M}(C, G_{M}) = &\displaystyle \!\!\!\!\! \{X \in \rr^{n \times p}:X\! =\! C + \sum_{i=1}^{q} \beta^{(i)}G^{(i)}_{M}, \vspace{-0.1cm}\\
		 & -1\leq \beta^{(i)} \leq 1 \}
	\end{array}   
\end{equation}

\definition \label{def:matrix_polytope} 
Consider a set of $n_v>0$ vertex matrices $\mathcal{V}_P=\{V_P^{(i)}\}_{i=1}^{n_v},$ $V_P^{(i)}\in \rr^{n\times p}.$ A matrix polytope 
$\displaystyle\mathcal{M}_P(\mathcal{V}_P)\!=\!
  \{M\in \rr^{n\times p}: M\!=\!\sum_{i=1}^{n_v}\rho_iV_P^{(i)},
  \,0\leq \rho_i\leq 1,\,  \sum_{i=1}^{n_v}\rho_i=1\}.$
\begin{remark}\label{remark:convex_hull}
Note that any zonotope (matrix zonotope) is also a polytope (matrix polytope) \cite[Sec. 3.3.3]{althoff2010reachability}. Consequently, a zonotope (matrix zonotope) can always be represented as the convex hull of its vertices (matrix vertices) \cite{althoff2015introduction}. 
\end{remark}
%
\subsection{Constrained Plant Model}\label{sec:plant_model}

Consider the class of discrete-time linear time-invariant systems described by
\begin{equation}\label{eq:linear_system}
    x_{k+1} = Ax_k + Bu_k + w_k
\end{equation}
where $k\in \mathbb{Z}_+ = \{0, 1, ...\}$ and $A\in \rr^{n\times n}, B\in \rr^{n\times m}$ are the system matrices. Moreover, $x_k\in \rr^n,$  $u_k \in \rr^m,$ and $w_k \in \rr^n$ are the state, control, and disturbance vectors, respectively. In addition, it is prescribed that:
\begin{equation}\label{eq:constraints}
\begin{array}{c}
x_k \in \mathcal{X}\subset \rr^n,\quad u_k \in \mathcal{U}\subset \rr^m,\quad 
 w_k \in \mathcal{W} \subset \rr^n
\end{array}
\end{equation}
where $\mathcal{X},\,\mathcal{U},\,\mathcal{W}$ are convex, compact, and contain the origin.
\definition \label{def:RCI_set}
A set $\mathcal{T}^0 \subseteq \mathcal{X}$ is called Robust Control Invariant (RCI)  for \eqref{eq:linear_system}-\eqref{eq:constraints} if $\forall x \in \mathcal{T}^0, \exists u \in \mathcal{U}: Ax + Bu + w \in \mathcal{T}^0, \forall w \in \mathcal{W}.$
\definition \label{def:model_based_controllable_sets}
Consider \eqref{eq:linear_system}-\eqref{eq:constraints} and a target set $\mathcal{T}^{j-1} \subseteq \mathcal{X}.$ The set of states Robust One-Step Controllable (ROSC) to $\mathcal{T}^{j-1}$ is
\begin{equation}\label{eq:ROSC-set}
\mathcal{T}^{j} = \{x \in \mathcal{X}: \exists u \in \mathcal{U}:\! x^+ \in \mathcal{T}^{j-1}, \forall w \in \! \mathcal{W} \}
\end{equation}
and $x^+:=Ax + Bu + w.$
%

\subsection{Model-based Set-Theoretic MPC (ST-MPC)} 
\label{sec:model_based_set_theoretic}

In this subsection, the dual-mode MPC controller developed in \cite{angeli2008ellipsoidal}, hereafter denoted as ST-MPC, is summarized.
Given the constrained plant model \eqref{eq:linear_system}-\eqref{eq:constraints}, a stabilizing receding-horizon controller capable of fulfilling all the prescribed constraints
and capable of robustly confining, in a finite number of steps, the state trajectory into terminal RCI set $\mathcal{T}_0$ 
can be designed as follows \cite[Section II.A]{lucia2022supervisor}:\newline
\noindent - \textit{Offline:}
	\begin{enumerate}
		\item Consider the constraint-free and disturbance-free model \eqref{eq:linear_system} and compute a stabilizing state-feedback control law, e.g., $u_k = -Kx_k,$ where $(A-BK)$ is a stable matrix. Then, compute the smallest RCI region associate to $\mathcal{T}^0$ for \eqref{eq:linear_system}-\eqref{eq:constraints}, see, e.g., \cite{rakovic2006reachability}.
		\item Starting from $\mathcal{T}^0$,  recursively compute a sequence of $N>0$ ROSC sets $\left\{\mathcal{T}^j\right\}^N_{j=1},$ where
		\begin{equation}\label{eq:STMPC_control_regions}
		\mathcal{T}^j\! =\! \{x \in \mathcal{X}\!:\! \exists u \in \mathcal{U}\!: x^+\! \in \mathcal{T}^{j-1}, \forall w \in \mathcal{W} \} 
	\end{equation}
	\end{enumerate}
\noindent - \textit{Online ($\forall\,k$):}
	\begin{enumerate}
		\item Find  $j_k := \displaystyle \min_{j\in \{0,\ldots,N\}} \{j: x_k \in \mathcal{T}^j \}$
		\item \textbf{If} $j_k = 0,$ \textbf{then} $u_k = - Kx_k.$ \textbf{Else} solve the following Quadratic Programming (QP) problem:
		\begin{equation}\label{eq:opt_for_computed_control_commands}
			\begin{array}{c}
				u_k = \arg\min\limits_{u\in \mathcal{U}} J(x_k, u) \text{ s.t. } \\
				Ax + Bu \in (\mathcal{T}^{j_k - 1}\ominus \mathcal{W})
			\end{array}
		\end{equation}
	where $J(x_k, u)$ is a convex cost function.
	\end{enumerate}

\begin{Property} \label{property:properties-ST_MPC} \it 
The ST-MPC algorithm enjoys the following properties \cite{angeli2008ellipsoidal}:
\begin{enumerate}
	\item The optimization \eqref{eq:opt_for_computed_control_commands} enjoys recursive feasibility.
	\item The terminal region $\mathcal{T}^0$ is reached in at most $j_0$ steps (with $j_0$ the set membership index at $k=0$) regardless of any admissible disturbance realization.
\end{enumerate}
\end{Property}
\subsection{Problem Statement}\label{sec:problem_statement}
\begin{assumption} \label{assumption_rank} \it
The matrices $A,$ $B$ in \eqref{eq:linear_system} are unknown, and the disturbance set $\mathcal{W}$ can be represented (or over-approximated) as a zonotope described in the $\mathcal{G}-$ or $\mathcal{H}-$ representation, i.e.,
\begin{equation}\label{eq:disturbance_zonotope}
\begin{array}{rcl}
 \mathcal{W}&=&\mathcal{Z}_w(c_w,G_w)  
 =
 \{w\in \rr^n: H_w w\leq h_w\},\,
 \end{array}
\end{equation}
where $c_w\in \rr^n,\,G_w\in \rr^{n\times p_w}$ and $H_w\in \rr^{n_w\times n}, h_w\in \rr^{n_w}.$
On the other hand, the state and input constraint sets can be either zonotopes or polytopes, and they are described using the $\mathcal{H}-$representation
\begin{equation}\label{eq:constraints_h_representations}
\mathcal{X} \!=\! \left\{\!x\! \in\! \rr^n\!:\! H_xx \leq h_x  \!\right\},\, \mathcal{U} \!=\! \{ \!u\!\in\! \rr^m\!:\! H_uu\leq h_u \!\}  
\end{equation}
where $H_x\in \rr^{n_x \times n},H_u\in \rr^{n_u \times m},h_x\in \rr^{n_x}, h_u\in \rr^{n_u}.$
Moreover, the following collection of $N_t>0$ input-state trajectories is available:
\begin{equation}\label{eq:available_trajectories}
	\left\{\left\{u^{(i)}_k\right\}^{N_{s}^{(i)}-1}_{k=0}\!\!\!\!,\,  \left\{x^{(i)}_k\right\}^{N_{s}^{(i)}-1}_{k=0}\right\}_{i=1}^{N_t}
\end{equation}
where $N_{s}^{(i)}>0$ is the number of samples in each trajectory. 
Moreover, the matrix $\begin{bmatrix}
X_-^T & U_-^T
\end{bmatrix}^T$ has full row rank, i.e., 
\begin{eqnarray}
&& \text{rank}(\begin{bmatrix}
X_-^T & U_-^T
\end{bmatrix}^T)=n+m \label{eq:rank_condition}\\
\!\!\!X_{-}\!\!\!&\!\!\!=\!\!&\!\!\! \left[
    x^{(1)}_0, \cdots, x^{(1)}_{N_{s}^{(1)}-1}, \cdots, x^{(N_t)}_0, \cdots, x^{(N_t)}_{N_{s}^{(N_t)}-1}\right] \\
\!\!\!U_{-}\!\!\!&\!\!\!=\!\!&\!\!\! \left [
    u^{(1)}_0, \cdots, u^{(1)}_{N_{s}^{(1)} -1}, \cdots, u^{(N_t)}_0, \cdots, u^{(N_t)}_{N_{s}^{(N_t)} -1}
    \right ] 
\end{eqnarray}
and $X\in \rr^{n \times (N_s+1)N_t},$ $X_{-}\in \rr^{n \times N_sN_t},$ 
and  $U_{-}\in \rr^{m \times N_sN_t}.$
\end{assumption}

\remark The condition~\eqref{eq:rank_condition} is fairly standard in the related literature, see, e.g.,\cite{de2019formulas} and references therein. In particular, it ensures that the collected data \eqref{eq:available_trajectories} have been obtained for sufficiently persistent exciting input sequences and that
\begin{equation}\label{eq:compute_AB_without_noise}
\begin{array}{c}
\begin{bmatrix}
    A & B
    \end{bmatrix} =  X_+ \begin{bmatrix}
    X_- \\ U_-
    \end{bmatrix}^{\dagger}\\
    X_{+} =\left[
    x^{(1)}_1, \cdots, x^{(1)}_{N_{s}^{(1)}}, \cdots, x^{(N_t)}_1, \cdots, x^{(N_t)}_{N_{s}^{(N_t)}}\right ]
\end{array}
\end{equation}
Moreover, the set descriptions \eqref{eq:disturbance_zonotope}-\eqref{eq:constraints_h_representations} are similar to the ones made in \cite{alanwar2021data}, with the only difference being that we do not restrict the constraints \eqref{eq:constraints} to be described by zonotopes.
\noindent \textbf{Problem of interest:} { \it
Given the input-state trajectories \eqref{eq:available_trajectories} collected for the linear model \eqref{eq:linear_system}-\eqref{eq:constraints} with unknown system matrices $(A,B):$
\begin{enumerate}
	\item Design a data-driven algorithm computing an inner approximation of the ROSC set \eqref{eq:ROSC-set}.
	\item Design a Data-Driven Set-Theoretic MPC (D-ST-MPC) controller for \eqref{eq:linear_system}-\eqref{eq:constraints} enjoying the same properties of ST-MPC (see Property~\ref{property:properties-ST_MPC}). 
\end{enumerate}
}
\section{Proposed Solution}\label{proposed_solution}

The proposed solution is developed as follows. First, we characterize the set of all system matrices $\mathcal{M}_{\hat{A}\hat{B}}$ consistent with the assumed disturbance bound $\mathcal{Z}_w$ and data \eqref{eq:available_trajectories} (see Section~\ref{sec:data_drive_representation_of_linear_systems}). Then, we show how a data-driven zonotopic inner approximation of the ROSC set \eqref{eq:STMPC_control_regions}  can be obtained from the vertex representation of the set of all compatible system matrices (see Section~\ref{sec:data_drive_inner_approx_BRS}). Finally, we design a data-driven version of the ST-MPC control strategy by leveraging a family of ROSC sets computed into an extended space domain (see Sections \ref{sec:augmented_ST_MPC} and \ref{sec:data-driven-st-mpc}).
\subsection{Data-Driven Representation of Linear Systems With a Bounded Disturbance}\label{sec:data_drive_representation_of_linear_systems}

In the presence of noise, there generally exist multiple matrices $[A,\,  B]$ that are consistent with the data. The following Lemma describes all the linear models consistent with the available data and noise bound $\mathcal{Z}_w.$ 
\begin{lemma} \it (\cite{koch2021provably, alanwar2021data,alanwar2022robust}\label{lem:noise_zonotope}) 
Let $T =\displaystyle \sum_{i=1}^{N_t} N^{(i)}_s$ and consider the following concatenation of multiple noise zonotopes 
$$\mathcal{M}_w=\mathcal{M}_w(C_w, [G^{(1)}_{{M}_w}, \ldots, G^{(qT)}_{{M}_w}]),
$$ 
where
$C_w\in \rr^{n\times (n+m)}=[c_{w},\ldots,\,c_w]$, and 
$G_{M_w}\in \rr^{n\times T(n+m)}$ is built $\forall\,i \in \{ 1, \ldots, q\},\, \forall\, j \in \{2, \ldots, T-1\}$ as 
\begin{equation}
    \begin{array}{rcl}
        G^{(1+(i-1)T)}_{{M}_w} &=& \begin{bmatrix}
                                 g^{(i)}_{w} & 0_{n \times (T-1)}
                                \end{bmatrix}
        \\
        G^{(j+(i-1)T)}_{{M}_w} &=& \begin{bmatrix}
                                0_{n \times (j-1)} & g^{(i)}_{w} & 0_{n \times (T-j)}
                                \end{bmatrix}
        \\
        G^{(T+(i-1)T)}_{{M}_w} &=& \begin{bmatrix}
                                0_{n \times (T-1)} & g^{(i)}_{w}
                                \end{bmatrix}
    \end{array}
\end{equation}
%
Then, the matrix zonotope
\begin{equation}\label{eq:compute_Mzono_AB}
\begin{array}{rcl}
   
\mathcal{M}_{{A} {B}}\!\!\!\!\!\!&=&\!\!\!\!\!\! (X_+ - \mathcal{M}_w) \begin{bmatrix}
    X_- \\ U_-
    \end{bmatrix}^{\dagger}   
    \\
     \!\!\!& := &\!\!\!\!\!\! \{[\hat{A},\, \hat{B}]: 
     [\hat{A},\, \hat{B}]\! =\! C_{AB} + \displaystyle\sum_{i=1}^{T} \beta^{(i)}G^{(i)}_{M_{AB}},\\
    &&
         -1\leq \beta^{(i)} \leq 1 \}
\end{array}
\end{equation}
where
$$
\begin{array}{c}
C_{AB} = (X_{+}-C_{w})\left(
    [X^T_{-},\, U^T_{-}]    ^{T}\right)^{\dagger}\\
    G_{M_{AB}} =\left[
        G^{(1)}_{M_w}
    \left(
    [X^T_{-},\, U^T_{-}]    ^{T}\right)^{\dagger}\!\!, \ldots, G^{(qT)}_{M_w}\left(
    [X^T_{-},\, U^T_{-}]    ^{T}\right)^{\dagger}
    \right]
\end{array}
$$
contains the set of all system matrices $[\hat{A},\, \hat{B}]$ that are consistent with the data \eqref{eq:available_trajectories} and disturbance bound $\mathcal{Z}_{w}$ and such that $[A,B]\in \mathcal{M}_{AB}.$ $\hfill\square$
\end{lemma}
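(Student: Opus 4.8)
The plan is to transcribe the data-generating dynamics into a single matrix identity, certify that the stacked true disturbances form an element of $\mathcal{M}_w$, and invert the identity via the right pseudoinverse guaranteed by~\eqref{eq:rank_condition}. Write $D:=\begin{bmatrix} X_-\\ U_-\end{bmatrix}\in\rr^{(n+m)\times T}$, and collect the disturbances that generated the data into a matrix $W_-\in\rr^{n\times T}$ whose columns are ordered exactly as those of $X_-,U_-,X_+$. Evaluating \eqref{eq:linear_system} along every sample of \eqref{eq:available_trajectories} and stacking column-wise gives
\begin{equation}\label{eq:proof_data_identity}
X_+ = [A,\,B]\,D + W_-,
\end{equation}
equivalently $X_+-W_- = [A,\,B]\,D$. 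This single algebraic fact is what everything rests on.

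Next I would certify $W_-\in\mathcal{M}_w$. Expanding a generic element of $\mathcal{M}_w$ via~\eqref{eq:matrix_zonotope} and the prescribed generator pattern, the generator $G^{(j+(i-1)T)}_{M_w}$ inserts the $i$-th generator $g_w^{(i)}$ of $\mathcal{Z}_w$ into column $j$; hence the $j$-th column of any $W\in\mathcal{M}_w$ equals $c_w+\sum_{i=1}^{q}\beta^{(j+(i-1)T)}g_w^{(i)}$, with coefficients free in $[-1,1]$ and independent across columns. Comparing with~\eqref{eq:zonotope_def}, each such column ranges over $\mathcal{Z}_w=\mathcal{W}$, so $\mathcal{M}_w$ is exactly the set of matrices all of whose columns lie in $\mathcal{W}$. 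Since every realized disturbance obeys $w_k\in\mathcal{W}$ by~\eqref{eq:constraints}, we obtain $W_-\in\mathcal{M}_w$. I expect this indexing/bookkeeping verification to be the main obstacle, as it is the only place where the specific concatenation must be matched to the column-wise disturbance bound.

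Finally I would invoke~\eqref{eq:rank_condition}: full row rank of $D$ makes $D^\dagger$ a genuine right inverse, so $D\,D^\dagger=I_{n+m}$. Right-multiplying~\eqref{eq:proof_data_identity} by $D^\dagger$ yields $[A,\,B]=(X_+-W_-)\,D^\dagger$, which, since $W_-\in\mathcal{M}_w$, exhibits $[A,\,B]$ as a member of $(X_+-\mathcal{M}_w)\,D^\dagger=\mathcal{M}_{AB}$. The identical manipulation applied to any pair $[\hat A,\,\hat B]$ consistent with the data — satisfying~\eqref{eq:proof_data_identity} with $W_-$ replaced by some admissible $\hat W\in\mathcal{M}_w$ — shows every consistent matrix lies in $\mathcal{M}_{AB}$, establishing the claimed containment. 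To close, I would confirm the explicit description~\eqref{eq:compute_Mzono_AB} by pushing the affine map $W\mapsto(X_+-W)\,D^\dagger$ through $\mathcal{M}_w$: the center maps to $(X_+-C_w)\,D^\dagger=C_{AB}$ and each generator to $-G^{(i)}_{M_w}\,D^\dagger$, the sign being immaterial since the $\beta^{(i)}$ range symmetrically over $[-1,1]$; this recovers $C_{AB}$ and $G_{M_{AB}}$. Note that only the right-inverse identity $D\,D^\dagger=I_{n+m}$ is used, never $D^\dagger D=I$ (which fails here); accordingly only containment of the consistent matrices, not equality, is asserted.
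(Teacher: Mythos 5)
Your proof is correct and follows the same route as the standard argument in the cited references (\cite{koch2021provably,alanwar2021data}), which the paper itself invokes without reproducing: stack the dynamics as $X_+=[A,B]D+W_-$, certify $W_-\in\mathcal{M}_w$ by the column-wise generator bookkeeping, and use the right-inverse identity $DD^{\dagger}=I_{n+m}$ from \eqref{eq:rank_condition} to conclude containment. Your explicit remarks on the sign of the mapped generators and on asserting only containment (not equality) are accurate and consistent with the lemma as stated.
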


\subsection{Data-Driven ROSC Sets}\label{sec:data_drive_inner_approx_BRS}

If the system matrices $(A,B)$ are known the ROSC set \eqref{eq:ROSC-set} can be computed as \cite[Sec.~11.3]{borrelli2017predictive}:
\begin{equation}\label{eq:model_based_BRS}
    \mathcal{T}^{j} =  \left(\left(\mathcal{T}^{j-1} \ominus \mathcal{Z}_w) \oplus (-B  \mathcal{U} \right)\right)A \cap \mathcal{X}
\end{equation}   
In what follows, we aim at computing zonotopic inner approximations of $\mathcal{T}^j$ given the set of all consistent matrices \eqref{eq:compute_Mzono_AB}. To this end, the following issues must be addressed:
\begin{enumerate}
    \item The ROSC set $\hat{\mathcal{T}}^{j}$ consistent with $\mathcal{M}_{{A} {B}}$ and $\mathcal{Z}_{w}$ and such that
    $\hat{\mathcal{T}}^{j} \subseteq \mathcal{T}^{j}$  is:
    \begin{equation}\label{eq:ROSC_data_driven}
         \!\!\!\!\!\!\!\!\!\!\!\!\!\!\hat{\mathcal{T}}^{j}\! =\!\left\{\!\bigcap_{[\hat{A},\hat{B}]\in \mathcal{M}_{{A} {B}}} \!\!\!\!\!\!\!\!\!\left(\left(\hat{\mathcal{T}}^{j-1} \ominus \mathcal{Z}_w\right) \oplus (-\hat{B}  \mathcal{U}) \right)\!\!\hat{A}\right\}\cap \mathcal{X}
    \end{equation}
    However, since there are infinite matrices $[\hat{A},\hat{B}]\in \mathcal{M}_{{A} {B}},$ the above is not computable.
   \item According to \eqref{eq:ROSC_data_driven}, $\hat{\mathcal{T}}^{j}$ is not (in general) a zonotope even if we assume that $\hat{\mathcal{T}}^{j-1},$ $\mathcal{U}$ and $\mathcal{X}$ are all zonotopes. Indeed, zonotopes are not closed under Minkowski difference operations and set intersections.
\end{enumerate}
The following proposition is instrumental to obtaining a computable definition of the ROSC set \eqref{eq:ROSC_data_driven} and solving the first issue.
\begin{proposition}\label{proposition:}\it Let $\mathcal{V}_{AB}$ be the set of vertices $\{\hat{A}_i,\hat{B}_i\}_{i=1}^{n_v}$ of the matrix polytope associated to $\mathcal{M}_{AB}.$ If $\hat{\mathcal{T}}^{j-1}$ is a convex set, then the ROSC set $\hat{\mathcal{T}}^{j}$ in \eqref{eq:ROSC_data_driven} 
can be computed as
\begin{equation}\label{eq:ROSC_data_driven_vertices}
        \!\hat{\mathcal{T}}^{j}\! =\!\left\{\!\bigcap_{[\hat{A}_i,\hat{B}_i]\in \mathcal{V}_{AB}}\!\!\!\!\! \!\!\!\!\! \!\left(\left(\hat{\mathcal{T}}^{j-1} \ominus \mathcal{Z}_w\right) \oplus (-\hat{B}_i \mathcal{U}) \right)\! \hat{A}_i \right\}\cap \mathcal{X}
    \end{equation}
\end{proposition}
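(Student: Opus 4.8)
The plan is to prove the set identity between \eqref{eq:ROSC_data_driven} and \eqref{eq:ROSC_data_driven_vertices} by a double inclusion, with essentially all the difficulty concentrated in one direction and driven by convexity. As a preliminary I set $S:=\hat{\mathcal{T}}^{j-1}\ominus\mathcal{Z}_w$ and record two facts I will lean on. First, $S$ is convex, because $\hat{\mathcal{T}}^{j-1}$ is convex by hypothesis and a Pontryagin difference satisfies $\hat{\mathcal{T}}^{j-1}\ominus\mathcal{Z}_w=\bigcap_{w\in\mathcal{Z}_w}(\hat{\mathcal{T}}^{j-1}-w)$, i.e. an intersection of translated convex sets. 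Second, by Remark~\ref{remark:convex_hull} the matrix zonotope $\mathcal{M}_{AB}$ is exactly the convex hull of its vertex set $\mathcal{V}_{AB}=\{[\hat{A}_i,\hat{B}_i]\}_{i=1}^{n_v}$, so every consistent pair $[\hat{A},\hat{B}]\in\mathcal{M}_{AB}$ admits a representation $[\hat{A},\hat{B}]=\sum_{i=1}^{n_v}\rho_i[\hat{A}_i,\hat{B}_i]$ with $\rho_i\geq 0$ and $\sum_i\rho_i=1$.

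The inclusion of \eqref{eq:ROSC_data_driven} into \eqref{eq:ROSC_data_driven_vertices} is immediate and uses no convexity: since $\mathcal{V}_{AB}\subseteq\mathcal{M}_{AB}$, the intersection defining \eqref{eq:ROSC_data_driven} ranges over a superset of the indices appearing in \eqref{eq:ROSC_data_driven_vertices}, and an intersection over a larger family is contained in the intersection over a smaller one. The intersection with $\mathcal{X}$ is common to both expressions and is therefore inert for this step.

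The content of the proposition is the reverse inclusion. I would fix a point $x$ belonging to the vertex set \eqref{eq:ROSC_data_driven_vertices} together with an arbitrary $[\hat{A},\hat{B}]\in\mathcal{M}_{AB}$ carrying barycentric coordinates $\rho_i$ as above, and aim to certify that $x$ lies in the pre-image of $S\oplus(-\hat{B}\mathcal{U})$ under $\hat{A}$, i.e. that there exists an admissible control $u\in\mathcal{U}$ with $\hat{A}x+\hat{B}u\in S$. The mechanism I would exploit is that, once a control is fixed, the evaluation map $[\hat{A},\hat{B}]\mapsto\hat{A}x+\hat{B}u$ is affine in the matrix argument, so the image of the whole polytope $\mathcal{M}_{AB}$ is the convex hull of the $n_v$ vertex evaluations $\hat{A}_ix+\hat{B}_iu$; convexity of $S$ then upgrades vertexwise membership to membership for every $[\hat{A},\hat{B}]$. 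Convexity of $\mathcal{U}$ enters when assembling the control that must be shared across the vertices, and intersecting with $\mathcal{X}$ at the end preserves the inclusion, closing the identity.

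I expect the coupling between the uncertain matrix and the control input to be the crux and the main obstacle. The naive route—select a vertexwise control $u_i$ realizing $\hat{A}_ix+\hat{B}_iu_i\in S$ and average, $u=\sum_i\rho_iu_i$—does not close, because $\hat{A}x+\hat{B}u$ then carries bilinear cross-terms $\rho_i\rho_k\hat{B}_iu_k$ with $i\neq k$ that cannot in general be reabsorbed into $\hat{B}\mathcal{U}$ for a single admissible $u$. The correct order of operations is to commit to one control valid simultaneously at all vertices and only then let the matrix uncertainty vary through the affine dependence on $[\hat{A},\hat{B}]$; this is precisely where convexity of the target $\hat{\mathcal{T}}^{j-1}$ is indispensable, and establishing the existence of such a single shared control is the delicate point on which I would spend the most care.
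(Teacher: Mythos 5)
Your proposal is not a complete proof: you correctly reduce the reverse inclusion to the existence of a \emph{single} control $u\in\mathcal{U}$ that works simultaneously at all vertices, you correctly observe that the vertexwise controls $u_i$ supplied by membership in \eqref{eq:ROSC_data_driven_vertices} cannot simply be averaged (the cross terms $\rho_i\rho_k\hat{B}_iu_k$ do not reabsorb), and then you stop, deferring the existence of the shared control as ``the delicate point.'' That point is the entire content of the nontrivial direction, so as written the argument has a hole exactly where the work is. Worse, the hole cannot be filled under the literal set-operation reading of \eqref{eq:ROSC_data_driven_vertices}: take $n=m=1$, vertices $[\hat{A}_1,\hat{B}_1]=[1,1]$ and $[\hat{A}_2,\hat{B}_2]=[1,-1]$, $\hat{\mathcal{T}}^{j-1}\ominus\mathcal{Z}_w=[-1,1]$, $\mathcal{U}=[-10,10]$, $x=5$. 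Each vertex admits its own feasible control ($u_1\in[-6,-4]$, $u_2\in[4,6]$), so $x$ lies in the intersection of the two preimage sets, yet no single $u$ serves both vertices, and for the midpoint model $[\hat{A},\hat{B}]=[1,0]\in\mathcal{M}_{AB}$ no control at all brings $\hat{A}x+\hat{B}u$ into $[-1,1]$; hence $x\notin$ \eqref{eq:ROSC_data_driven} while $x$ belongs to the right-hand side of \eqref{eq:ROSC_data_driven_vertices}.

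The paper's own proof avoids this obstacle by silently reading both \eqref{eq:ROSC_data_driven} and \eqref{eq:ROSC_data_driven_vertices} with the control quantified \emph{before} the matrix uncertainty: it rewrites the ROSC set as in \eqref{eq:ROSC_sets_for_polytopic_model}, namely $\{x\in\mathcal{X}:\exists u\in\mathcal{U}:\ \sum_i\alpha_i(\hat{A}_ix+\hat{B}_iu+w)\in\hat{\mathcal{T}}^{j-1},\ \forall w,\ \forall\alpha\in\mathcal{P}\}$, and under that quantifier ordering your own mechanism (affineness of $[\hat{A},\hat{B}]\mapsto\hat{A}x+\hat{B}u$ for a fixed $u$, plus convexity of $\hat{\mathcal{T}}^{j-1}\ominus\mathcal{Z}_w$) closes the equivalence in two lines: choosing $\alpha$ equal to a coordinate vector recovers the vertex conditions, and convexity gives the converse. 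That shared-control semantics is also what the augmented sets \eqref{eq:ROSC_data_driven_augm} enforce, since there the intersection over vertices is taken in the joint $(x,u)$ space, forcing one $u$ to satisfy all vertex constraints. So your diagnosis of where convexity does and does not suffice is actually sharper than the paper's exposition, but to obtain a proof you must either adopt the shared-control reading explicitly and finish with the two-line convexity argument, or concede that the stated identity, read literally, holds only as the inclusion \eqref{eq:ROSC_data_driven} $\subseteq$ \eqref{eq:ROSC_data_driven_vertices}.
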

\proof 

Given $\mathcal{V}_{AB},$ we can write any admissible system one-step evolution as
\begin{equation}
  x^+(\alpha)=A(\alpha)x + B(\alpha)u + w  
\end{equation}
where
\begin{equation}
\begin{array}{c}
   \!\!\! \hat{A}(\alpha)= \displaystyle\sum_{i=1}^{n_v}\alpha_i A_i,\quad  
    \hat{B}(\alpha)= \displaystyle\sum_{i=1}^{n_v}\alpha_i B_i 
     \\
   \!\!\!\left[\alpha_1 , \ldots, \alpha_{n_v}\right]^T\!\!\!\in \!\mathcal{P}\! = \!\{\alpha \in \rr^{n_v}\!\!: \displaystyle \alpha_i\geq0,\,\sum_{i=1}^{n_v}\!\!\alpha_i=1\}
\end{array}    
\end{equation}
Consequently, the data-driven ROSC set \eqref{eq:ROSC-set} can be written as
\begin{equation}\label{eq:ROSC_sets_for_polytopic_model}
\begin{array}{rcl}
\displaystyle
\hat{\mathcal{T}}^j\!\!\!&\!\!=\!\!&\!\!\!\!\! \{x \in \mathcal{X}\!:\! \exists u\! \in \mathcal{U}: x^+\!(\alpha)\! \in \! \hat{\mathcal{T}}^{j-1},
         \forall w \!\in\! \mathcal{Z}_w, \forall \alpha\!\in\! \mathcal{P}\}
         \\
         \!\!\!&\!\!=\!\!&\!\!\!\!\! \{x \in \mathcal{X}\!: \exists u \in \!\mathcal{U}: \!\displaystyle\sum_{i=1}^{n_v}\!\alpha_i (\hat{A}_i x + \hat{B}_i u +w) \in \hat{\mathcal{T}}^{j-1}, \\
          && \forall w\in \mathcal{Z}_w, 
        
         \forall \alpha \in \mathcal{P}\}
\end{array}
\end{equation}
Then, definition \eqref{eq:ROSC_sets_for_polytopic_model} implies that any convex combination of the vertices must belong to $\hat{\mathcal{T}}^{j-1}.$ However, if 
$\hat{\mathcal{T}}^{j-1}$ is a convex set, then such a requirement is satisfied iff $\forall\, i=1,\ldots,n_v,$ (i.e., for any vertex model $(\hat{A}_i,\hat{B}_i)$)  $x^+_i=\hat{A}_ix + \hat{B}_iu+w\in \hat{\mathcal{T}}^{j-1},\,\,\forall w \in \mathcal{Z}_w.$ Consequently, the ROSC set \eqref{eq:ROSC_data_driven} can be computed as in \eqref{eq:ROSC_data_driven_vertices}, concluding the proof. $\hfill\blacksquare$

The second issue can be addressed directly through computation
of a polytopic set. However, in this case, the  number of vertices necessary to describe the ROSC sets will increase exponentially with the number of computed sets, making such an option not suitable for implementing the ST-MPC controller. On the other hand, a zonotopic inner approximation of the exact ROSC sets can be derived as follows. Consider the $\mathcal{H}-$ representation of $\mathcal{Z}_w$ and $\hat{\mathcal{T}}^{j-1}=\left\{\!x\! \in\! \rr^n\!:\! H_{\hat{\mathcal{T}}^{j-1}} x \leq h_{\hat{\mathcal{T}}^{j-1}}  \!\right\},
$
then, we can first compute the polytope  $\hat{\mathcal{T}}^j$ using \eqref{eq:ROSC_data_driven}  and then compute a zonotopic inner approximation of $\hat{\mathcal{T}}^j,$ using, e.g., the algorithm proposed in \cite[Section IV.A.2]{yang2021scalable} and summarized in Lemma~\ref{lemma:inner_approx}. 

\begin{remark}\label{remak:data-driven-stMPC}
The above solution to compute data-drive ROSC set $\hat{\mathcal{T}^j}$ is still not suitable to efficiently implement a data-driven ST-MPC. Indeed, if a family of zonotopic data-driven ROSC sets $\{\hat{\mathcal{T}}^j\}_{j=0}^N$ is computed as prescribed by the offline phase of the ST-MPC algorithm (see Section~\ref{sec:model_based_set_theoretic}), then the data-driven equivalent of the optimization  \eqref{eq:opt_for_computed_control_commands}
would be
\begin{equation}\label{eq:data-drive_opt_for_computed_control_commands}
			\begin{array}{c}
				u_k = \arg\min\limits_{u\in \mathcal{U}} J(x_k, u) \hspace{2mm} s.t. \\
				\hat{A}_ix + \hat{B}_iu \in (\mathcal{T}^{j_k - 1}\ominus \mathcal{Z}_w), \,\forall i=1,\ldots,n_v
			\end{array}
		\end{equation}   
Note that the above presents a number of constraints which scale exponentially with the number of matrix generators describing $\mathcal{M}_{AB}$ \cite{kochdumper2019representation}. Consequently, the resulting optimization problem presents a computational complexity much higher than the model-driven counterpart~\eqref{eq:opt_for_computed_control_commands}. In the next section, we solve such an issue by computing an augmented zonotopic representation of the ROSC sets.
\end{remark}
\subsection{Augmented Data-Driven ROSC Sets}\label{sec:augmented_ST_MPC}

The model-based ROSC set \eqref{eq:ROSC-set} can be equivalently re-defined as \cite{lucia2022supervisor}:
\begin{equation}\label{eq:extended_control_regions}
	\begin{array}{rcl}
    \Xi^j\!\!\!\! &\!=\!&\!\!\!\! \left\{\!(x,u) \in \mathcal{X} \times \mathcal{U}: Ax +Bu +w\in \mathcal{T}^{j-1}, \forall w \in \! \mathcal{W}
    \right\}\\
    \mathcal{T}^{j}\!\!\!\! &\!=\!&\!\!\!\! Proj_x(\Xi^j) 
	\end{array}
\end{equation}
where $\Xi^j$ is the $(x,u)-$ augmented space description of  the ROSC set and $Proj_x(\Xi^j)$ performs a projection operation of $\Xi^j$ into the $x-$domain. Consequently, for the data-driven model described by the matrix vertices $\mathcal{V}_{AB}$, we can write 
\begin{equation}\label{eq:ROSC_data_driven_augm}
	\begin{array}{c}
	\!\hat{\Xi}^{j}_{AB} \!= \!\!\!\!\!\!\!\!\! \displaystyle  
	\bigcap_{[\hat{A}_i,\hat{B}_i]\in \mathcal{V}_{AB}}\!\!\!\!\!\!\!\!\! 
	\left\{z=[x^T,u^T]^T\! \in \rr^{n+m}\!:\!H_{z}^iz\leq h_{z}^i\right\} \vspace{0.1cm}
	\\
	 \hat{\mathcal{T}}^{j}= Proj_x(\hat{\Xi}^{j}_{AB})
	\end{array}
\end{equation}
where
\begin{equation}\label{eq:H-rep_extended}
	H_z^i= \begin{bmatrix}
		H_x & 0 \\
		H_{\hat{\mathcal{T}}^{j-1}}\hat{A}_i & H_{\hat{\mathcal{T}}^{j-1}}\hat{B}_i\\
		0 & H_u
	\end{bmatrix}, 
	\quad
	h_z^i=\begin{bmatrix}
		h_x\\
		\Tilde{h}_{\hat{\mathcal{T}}^{j-1}}\\
		h_u
	\end{bmatrix} 
\end{equation}
and
\begin{equation}\label{eq:compute_tilde_h}
	\Tilde{h}_{\hat{\mathcal{T}}^{j-1}_r} = \min_{w\in \mathcal{W}}(h_{\hat{\mathcal{T}}^{j-1}_r} - H_{{\hat{\mathcal{T}}^{j-1}}_{r}} w)  
\end{equation}
\begin{remark}
In \eqref{eq:H-rep_extended}, the first and third rows describe state and input constraints, respectively, while the second row imposes, via the Minkowski difference (see \eqref{eq:compute_tilde_h}), that  $\hat{A}_ix +\hat{B}_iu +w\in \mathcal{T}^{j-1}, \forall w \in \! \mathcal{W}.$ \hfill $\Box$
\end{remark}
\begin{remark}\label{remark-expensive}
For similar reasons to the ones explained in Section~\ref{sec:data_drive_inner_approx_BRS}, the augmented ROSC set $\hat{\Xi}^j$ is a polytope. Moreover, the projection operation required to obtain $\hat{\mathcal{T}}^j$ is computationally demanding if performed on a polytope \cite{yang2021scalable}. 
\end{remark}
To obtain a zonotopic inner approximation of $\hat{\Xi}^j$ we can resort to the following lemma:
\begin{lemma} \it \label{lemma:inner_approx} (Adapted from \cite[Section IV.A.2]{yang2021scalable})
Consider a template zonotope $\mathcal{Z}(c,G),$ with $c\in \rr^{n+m}$ denoting any center vector  and $G\in \rr^{(n+m) \times p}$ a fixed set of chosen generators.  The set $\hat{\Xi}^j_z = \mathcal{Z}(c^*,\beta^* G)$ is an inner approximation of $\hat{\Xi}^{j}$ (i.e., $\hat{\Xi}^j_z\subseteq \hat{\Xi}^j$) if $c^*\in \rr^{n+m},\beta^*=\in \rr^{p\times 1}$ are computed solving the following optimization problem
\begin{equation}\label{eq:zonotopic_inner_approx}
	\begin{array}{c}
	\{c^*,\beta^*\}=
	  \arg\max\limits_{c,\,\beta=[\beta_1,\ldots,\beta_l]^T} \sum\limits_{l=1}^{p} d_l \log(\beta_l),\quad s.t. \vspace{1mm} \\
		 H_{\hat{\Xi}^{j}}c + |H_{\hat{\Xi}^{j}}G|\beta \leq h_{\hat{\Xi}^{j}}, 0\leq \beta \leq 1
	\end{array}
\end{equation}
where $(H_{\hat{\Xi}^{j}},h_{\hat{\Xi}^{j}})$ denotes the $\mathcal{H}-$representation of  $\hat{\Xi}^{j},$ $d_l \geq 0$ are weighting factors,  and $|H_{\hat{\Xi}^{j}}G|$ denotes a matrix obtained taking the element-wise absolute value of $H_{\hat{\Xi}^{j}}G$.  
\end{lemma}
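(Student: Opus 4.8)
The plan is to reduce the set-containment claim $\hat{\Xi}^j_z\subseteq\hat{\Xi}^j$ to the feasibility of the linear inequality that appears as the constraint of~\eqref{eq:zonotopic_inner_approx}, so that containment is guaranteed for \emph{any} feasible pair $(c,\beta)$ and, in particular, for the optimizer $(c^*,\beta^*)$. The objective $\sum_{l} d_l\log(\beta_l)$ plays no role in establishing containment; it only serves to enlarge the inner approximation. I would therefore state this up front and then focus the entire argument on the constraint.

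First I would write a generic point of the scaled template zonotope. By Definition~\ref{def:zonotope}, every $z\in\hat{\Xi}^j_z=\mathcal{Z}(c^*,\beta^*G)$ has the form $z=c^*+\sum_{l=1}^p\xi_l\,\beta^*_l\,g^{(l)}$ with $\xi_l\in[-1,1]$, where $g^{(l)}$ is the $l$-th column of $G$ and $\beta^*_l$ the $l$-th entry of $\beta^*$. Since $\hat{\Xi}^j=\{z:H_{\hat{\Xi}^j}z\le h_{\hat{\Xi}^j}\}$, the containment $\hat{\Xi}^j_z\subseteq\hat{\Xi}^j$ is equivalent to requiring, for every row index $r$ and every $\xi\in[-1,1]^p$, that $(H_{\hat{\Xi}^j}c^*)_r+\sum_{l=1}^p\xi_l\,\beta^*_l\,(H_{\hat{\Xi}^j}g^{(l)})_r\le (h_{\hat{\Xi}^j})_r$.

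The core step is to eliminate the universal quantifier over $\xi$ by a worst-case (support-function) computation. For each fixed row $r$ the left-hand side is linear in $\xi$ over the box $[-1,1]^p$, so its maximum is attained at $\xi_l=\mathrm{sign}\big(\beta^*_l(H_{\hat{\Xi}^j}g^{(l)})_r\big)$ and equals $(H_{\hat{\Xi}^j}c^*)_r+\sum_{l=1}^p\beta^*_l\,|(H_{\hat{\Xi}^j}g^{(l)})_r|$. Here I would invoke the sign constraint $0\le\beta^*\le 1$ from~\eqref{eq:zonotopic_inner_approx} to pull $\beta^*_l\ge 0$ out of the absolute value, i.e. $|\beta^*_l(H_{\hat{\Xi}^j}g^{(l)})_r|=\beta^*_l|(H_{\hat{\Xi}^j}g^{(l)})_r|$, which is precisely what makes the bound affine in $\beta^*$. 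Collecting the rows into matrix form, the worst-case condition becomes $H_{\hat{\Xi}^j}c^*+|H_{\hat{\Xi}^j}G|\,\beta^*\le h_{\hat{\Xi}^j}$, with $|H_{\hat{\Xi}^j}G|$ the element-wise absolute value. This is exactly the constraint satisfied by the optimizer $(c^*,\beta^*)$, so the containment follows; since the argument uses only feasibility, the proof concludes.

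I do not expect a genuine obstacle here, as this is a standard zonotope-in-polytope containment certificate. The only point requiring care is the sign bookkeeping in the support-function step, namely justifying that $\beta^*_l\ge 0$ lets the absolute value factor through $\beta^*_l$ so that the resulting bound matches the constraint of~\eqref{eq:zonotopic_inner_approx} verbatim. I would also remark that the optimization is well posed, since feasibility is guaranteed by scaling the template $G$ down sufficiently (e.g. choosing $c$ in the interior of $\hat{\Xi}^j$ and $\beta$ small), so that the claimed optimizer indeed exists.
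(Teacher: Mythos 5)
Your proof is correct: the reduction of the containment $\hat{\Xi}^j_z\subseteq\hat{\Xi}^j$ to feasibility of the constraint in \eqref{eq:zonotopic_inner_approx}, via the row-wise worst-case maximization over $\xi\in[-1,1]^p$ and the observation that $\beta^*_l\geq 0$ lets the absolute value factor through to give $H_{\hat{\Xi}^j}c^*+|H_{\hat{\Xi}^j}G|\beta^*\leq h_{\hat{\Xi}^j}$, is exactly the standard zonotope-in-polytope containment certificate underlying the cited result. The paper itself offers no proof of this lemma (it defers entirely to \cite[Section IV.A.2]{yang2021scalable}), so your argument fills that gap with the same support-function approach, and your closing remarks that the objective is irrelevant to containment and that feasibility requires $\hat{\Xi}^j$ to have nonempty interior are both apt.
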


By denoting with  $\text{In}_z(\hat{\Xi}^{j})$ the   approximation performed by \eqref{eq:zonotopic_inner_approx}, the inner zonotopic approximation of  \eqref{eq:ROSC_data_driven_augm} is   
\begin{equation}\label{eq:inner_ROSC_data_driven_augm}
		\begin{array}{c}
		\!\hat{\Xi}^{j}_{AB} \!= \!\!\!\!\!\!\!\!\! \displaystyle  
		\bigcap_{[\hat{A}_i,\hat{B}_i]\in \mathcal{V}_{AB}}\!\!\!\!\!\!\!\!\! 
		\left\{z=[x^T,u^T]^T\! \in \rr^{n+m}\!:\!H_{z}^iz\leq h_{z}^i\right\} \vspace{0.1cm}
		\\
		\hat{\Xi}^{j}_z \!=\text{In}_z\left\{\hat{\Xi}^{j}_{AB}\right\}, \quad 
		\hat{\mathcal{T}}^{j}_z= Proj_x(\hat{\Xi}^j_z)
	\end{array}
\end{equation}
%
\subsection{Data-Driven Set-Theoretic MPC}\label{sec:data-driven-st-mpc}

It is now shown how the augmented ROSC sets \eqref{eq:inner_ROSC_data_driven_augm} can be used to implement a data-driven ST-MPC strategy, solving the issue discussed in Remark~\ref{remak:data-driven-stMPC}. 

\begin{proposition}
\it
The Data-Driven Set-Theoretic MPC controller described in Algorithm~\ref{algorithm:set_theoretic_MPC} 
enjoys the same properties as the model-based ST-MPC (see Property~\ref{property:properties-ST_MPC})
\end{proposition}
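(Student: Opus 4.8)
The plan is to establish the two items of Property~\ref{property:properties-ST_MPC} for the data-driven controller by replicating the argument used for the model-based ST-MPC in \cite{angeli2008ellipsoidal}, the only nontrivial modification being that the true (unknown) one-step dynamics must be shown to be correctly embedded in the offline-computed augmented sets $\hat{\Xi}^j_z$. The structural facts I would rely on are: (i) by Lemma~\ref{lem:noise_zonotope}, $[A,B]\in\mathcal{M}_{AB}$, so by Remark~\ref{remark:convex_hull} the true pair admits a representation $[A,B]=\sum_{i=1}^{n_v}\alpha_i[\hat{A}_i,\hat{B}_i]$ with $\alpha\in\mathcal{P}$ over the vertices $\mathcal{V}_{AB}$; (ii) each zonotopic set $\hat{\mathcal{T}}^{j}_z$ is convex; and (iii) the offline recursion \eqref{eq:inner_ROSC_data_driven_augm} is built so that the stored inner approximation $\hat{\mathcal{T}}^{j-1}_z$ (not its polytopic over-set) is the one used in \eqref{eq:H-rep_extended}, which makes the family self-consistent.

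The key step is an embedding claim: for every $j\geq1$ and every $z=[x^T,u^T]^T\in\hat{\Xi}^j_z$ it holds that $x\in\mathcal{X}$, $u\in\mathcal{U}$, and $Ax+Bu+w\in\hat{\mathcal{T}}^{j-1}_z$ for all $w\in\mathcal{W}$. I would prove this by noting that $\hat{\Xi}^j_z\subseteq\hat{\Xi}^j_{AB}$, so the first and third block-rows of \eqref{eq:H-rep_extended} give $x\in\mathcal{X}$ and $u\in\mathcal{U}$, while the second block-row together with \eqref{eq:compute_tilde_h} enforces $\hat{A}_i x+\hat{B}_i u+w\in\hat{\mathcal{T}}^{j-1}_z$ for every vertex $i$ and every $w\in\mathcal{W}$. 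Using (i) and $\sum_i\alpha_i=1$, one has $Ax+Bu+w=\sum_{i=1}^{n_v}\alpha_i(\hat{A}_i x+\hat{B}_i u+w)$, which, by convexity (ii), lies in $\hat{\mathcal{T}}^{j-1}_z$. This is exactly the convex-combination argument already carried out in the proof of Proposition~\ref{proposition:}, now applied to the zonotopic inner approximations.

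Given the embedding claim, recursive feasibility (item~1) follows the model-based template. For a state with membership index $j_k\geq1$, since $\hat{\mathcal{T}}^{j_k}_z=Proj_x(\hat{\Xi}^{j_k}_z)$ there exists an admissible $u_k$ with $(x_k,u_k)\in\hat{\Xi}^{j_k}_z$; hence the online constraint set is nonempty and the QP is feasible. The embedding claim then guarantees $x_{k+1}=Ax_k+Bu_k+w_k\in\hat{\mathcal{T}}^{j_k-1}_z$ for every realized $w_k\in\mathcal{W}$, so $j_{k+1}\leq j_k-1$ and feasibility is preserved at the next step (or the controller switches to terminal mode once $j=0$). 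The same inequality yields item~2: the membership index decreases by at least one per step independently of the disturbance, so $\hat{\mathcal{T}}^0_z$ is reached in at most $j_0$ steps, and robust invariance of $\hat{\mathcal{T}}^0_z$ under the terminal law keeps the trajectory there thereafter.

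I expect the main obstacle to be the terminal-set and well-posedness part rather than the recursion itself. Concretely, I must argue that the data-driven terminal set $\hat{\mathcal{T}}^0_z$ is genuinely RCI for the \emph{true} plant---again through the vertex/convexity embedding, applied now to the augmented RCI set rather than to a ROSC step---and that the terminal control action is obtained from data (e.g., by a feasibility search over $\hat{\Xi}^0_z$), since $-Kx_k$ cannot be formed without knowledge of $A,B$. I would also need to assume that the offline inner approximations \eqref{eq:zonotopic_inner_approx} return nonempty sets for all $j=0,\ldots,N$, so that the family $\{\hat{\mathcal{T}}^j_z\}$ is well defined; under that standing assumption the two properties transfer to the data-driven setting verbatim from the model-based case.
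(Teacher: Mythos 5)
Your proposal is correct and follows essentially the same route as the paper's own proof: both rest on the embedding claim that membership of $[x^T,u^T]^T$ in $\hat{\Xi}^j_z$ forces $\hat{A}_ix+\hat{B}_iu+w\in\hat{\mathcal{T}}^{j-1}_z$ for every vertex model and every $w\in\mathcal{W}$, hence (by the convex-combination argument of Proposition~\ref{proposition:} and $[A,B]\in\mathcal{M}_{AB}$) for the true plant, after which recursive feasibility and the $j_{k+1}\leq j_k-1$ decrease transfer verbatim from the model-based ST-MPC. Your version merely makes explicit what the paper compresses into ``implicitly embeds the worst-case admissible one-step evolution,'' and your closing caveats on the terminal set and nonemptiness correspond to what the paper handles separately in Remark~5.
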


\begin{proof}
First, in the offline phase, the proposed algorithm computes, using \eqref{eq:ROSC_data_driven_vertices}, a family of pair of ROSC sets $\{\hat{\Xi}_z^j,\hat{\mathcal{T}^j_z}\}_{j=1}^N,$ where  $\hat{\mathcal{T}^j_z}$ is an inner zonotopic approximation of \eqref{eq:STMPC_control_regions} and $\hat{\Xi}_z^j$ is an augmented ROSC set representation of $\hat{\mathcal{T}^j_z}$ ensuring, by construction, that if $[x,\,u]^T\in \hat{\Xi}_z^j$ then $u\in \mathcal{U}$ and $\hat{A}_ix+\hat{B}_iu+w\in \hat{\mathcal{T}^{j-1}_z},\forall\,w\in \mathcal{W}, \forall \{\hat{A}_i,\hat{B}_i\}\in \mathcal{V}_{AB}.$ Consequently, $\hat{\Xi}_z^j$ implicitly embeds the worst-case admissible one-step evolution of the underlying linear system and the optimization \eqref{eq:control_based_extended_control_regions} defines the data-driven equivalent of \eqref{eq:opt_for_computed_control_commands}. Since by construction \eqref{eq:opt_for_computed_control_commands} is recursively feasible, then the data-driven MPC Algorithm~\ref{algorithm:set_theoretic_MPC} 
enjoys the same properties as its model-based counterpart.
\end{proof}
\begin{algorithm}[h!] \label{algorithm:set_theoretic_MPC}
\textbf{Input:}
Ver$(\mathcal{M}_{AB})$, RCI set $\mathcal{T}^0$, $\mathcal{X}$, $\mathcal{U}$ and $\mathcal{W}$\;
\vspace{0.15cm}

 \noindent
	\xrfill[0.7ex]{1pt}Offline\xrfill[0.7ex]{1pt}
\begin{algorithmic}[1]
\State Let $\hat{\mathcal{T}}^0_z = \mathcal{T}^0$ a given terminal RCI set\;
\State Compute the ROSC sets $\{\hat{\Xi}^{j}_z$, $\hat{\mathcal{T}}^{j}_z\}_{i=1}^N$ using \eqref{eq:inner_ROSC_data_driven_augm}
\end{algorithmic}
\vspace{0.15cm}
\noindent
	\xrfill[0.7ex]{1pt}Online ($\forall\,k$)\xrfill[0.7ex]{1pt} 
\begin{algorithmic}[1]
 \State Find set membership index $j_k := \displaystyle \!\!\!\!\!\! \min_{j\in \{0,\ldots,N\}} \{\!j\!: \! x_k \in \! \hat{\mathcal{T}}^j_z \!\}$\;
\If{$j_k=0$}{   $u_k = -Kx_k$}
\Else{ compute $u_k$ by solving the following QP problem
\begin{equation}\label{eq:control_based_extended_control_regions}
	\begin{array}{c}
		u_k = \arg\min J(x_k, u) \quad s.t. \\
	\left[x_k^T,u^T\right]^T \in \hat{\Xi}^j_z
	\end{array}
\end{equation}}

\EndIf
\end{algorithmic}
\caption{Data-Driven Set-Theoretic MPC (D-ST-MPC)}
\end{algorithm}
\begin{remark}\label{remark:computing_terminal_region}
	The computation of the RCI region $\mathcal{T}^0$	is outside of the scope of this paper. However, it can be computed/estimated using data-driven strategies \cite{chen2021data}. In alternative, $\mathcal{T}^0$ can be chosen as an arbitrary small set that is only used to initialize the ROSC set computation. With such a solution, the proposed strategy is feasible if $\mathcal{T}^0 \subseteq \hat{\mathcal{T}}_z^1.$ Indeed, according to Definition~\ref{def:RCI_set}, if ${\mathcal{T}}^0 \subseteq \hat{\mathcal{T}}_z^1,$ then  ${\mathcal{T}}^0$ is an RCI set and the associated terminal control law for any $x_k\in {\mathcal{T}}^0$ is
	\begin{equation}\label{eq:terminal_controller_noRCI}
		\begin{array}{c}
			u_k = \arg\min J(x_k, u) \quad s.t.  \\
			\left[x_k^T,u^T\right]^T \in \hat{\Xi}^1_z
		\end{array}
	\end{equation}
\end{remark}

\section{Illustrative Example}\label{sec:simulation}
Consider the discrete-time linear time-invariant system described \cite{rakovic2006reachability}, where  %
\begin{equation}\label{eq:simulation_example}
  A= \begin{bmatrix}
     0.7969 & -0.2247\\0.1798 & 0.9767
  \end{bmatrix},\quad B=\begin{bmatrix}
     0.1271\\0.0132
  \end{bmatrix}
\end{equation}
$|u_k| \leq 3, |x_k^l|\leq 10,\,l=1,2$   and the disturbance set is $\mathcal{Z}_w (0_2,0.005I_2)$.
By assuming that the pair $(A,B)$ is unknown, we have
simulated the system  considering random admissible inputs to collect two input-state trajectories \eqref{eq:available_trajectories}, each containing $N_s=10$ samples.  After verifying that the rank condition \eqref{eq:rank_condition} is fulfilled, we have computed  the set $\mathcal{M}_{AB}$ of all system matrices consistent with \eqref{eq:available_trajectories} and $\mathcal{W}.$ Then, we computed the vertices $\mathcal{V}_{AB}$ of $\mathcal{M}_{AB}$ and used \eqref{eq:inner_ROSC_data_driven_augm} to build a family of $N=15$ data-driven ROSC sets  $\{\hat{\Xi}^{j}_z,\,\hat{\mathcal{T}}^{j}_z\}$ (dashed red zonotopes in Fig.~\ref{fig:controllable_regions}). Since $\hat{\mathcal{T}}_z^1 \subseteq \hat{\mathcal{T}}_z^2,$  $\hat{\mathcal{T}}_z^1$ has treated as the terminal RPI set with control law \eqref{eq:terminal_controller_noRCI}, see Remark~\ref{remark:computing_terminal_region}.
To contrast the proposed data-drive ST-MPC with its model-based counterpart, we have also computed a model-based RCI set $\mathcal{T}_0$ using $K=[-0.47, -0.19]$ and the algorithm in \cite{rakovic2005invariant}. Then,  $N=5$ model-based ROSC sets $\{\mathcal{T}^j\}$ has been computed as prescribed by \eqref{eq:STMPC_control_regions} (solid black polytopes in Fig.~\ref{fig:controllable_regions}). Note that the discrepancies in the number of ROSC sets computed for the model-based and data-driven approaches are justified by the fact that the data-driven approach computes inner zonotopic approximations of the actual polytopic sets computed with the model. 
The results shown in Figs.~\ref{fig:controllable_regions}-\ref{fig:control_inputs} have been obtained for an initial plan's initial state $x_0=[-2,\, 1.1]^T$ (i.e., $x_0\in \hat{\mathcal{T}}^{15}_z$ and $x_0\in {\mathcal{T}}^{5}$) and configuring the ST-MPC and D-ST-MPC schemes to minimize the control effort, i.e.,  $J(x_k,u)=0.5u^2.$ In Fig.~\ref{fig:controllable_regions}, it is possible to appreciate how both ST-MPC and D-ST-MPC  controller are able to confine, in a finite number of steps, the state trajectory in the RCI set $\hat{\mathcal{T}}_z^1$ and $\mathcal{T}_0,$ respectively (see the zoom-in in Fig.~\ref{fig:controllable_regions}). This is also confirmed by the set-membership index shown in Fig.~\ref{fig:control_inputs} (top subplot). In addition, in Fig.~\ref{fig:control_inputs} (bottom subplot), it is possible to note that both controllers produce control input sequences fulfilling the prescribed input constraints. As per the design, D-ST-MPC shows the same behavior as ST-MPC, however, as expected, D-ST-MPC is slightly more conservative in terms of the controller's domain of attraction (i.e., the region given by the union of the ROSC sets shown in Fig.~\ref{fig:controllable_regions}) and in terms of time required to reach the terminal RCI set ($5$ steps for ST-MPC and $8$ steps for D-ST-MPC).

\begin{figure}[h!]
    \centering
    \includegraphics[width=1\linewidth]{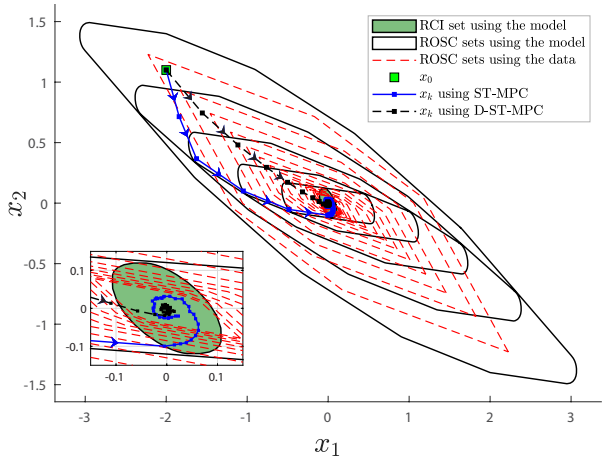}
    \caption{Model-based controllable sets $\{\mathcal{T}^j\}^5_{j=0}$ (black polyhedral) of system~\eqref{eq:simulation_example}, projection of extended space data-driven controllable sets $\{\Tilde{\Xi}^j\}_{j=0}^{15}$ on $x_1$ and $x_2$ axis (red dashed polyhedral)   
    }  
    \label{fig:controllable_regions}
\end{figure}
\begin{figure}[h!]
    \centering
    \includegraphics[width=1\linewidth]{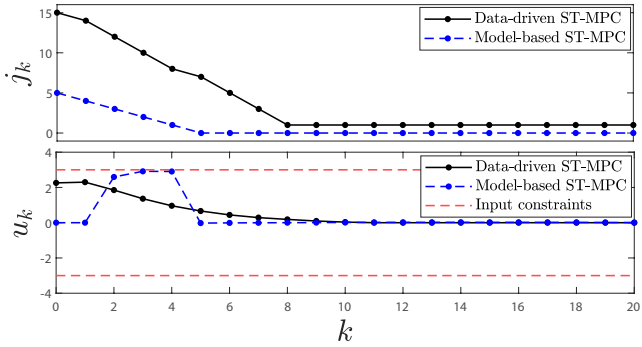}
    \caption{Set membership index (top figure) and control input signals (bottom figure)}
    \label{fig:control_inputs}
\end{figure}
%

\section{Conclusions}\label{conclusions}

This paper proposes a novel data-driven method for computing ROSC sets for an unknown linear system subject to bounded disturbances, state, and input constraints. In particular, it has been shown that using a collection of input-output trajectories, a zonotopic inner approximation of the backward reachable sets can be obtained by resorting to an extended space representation. The peculiar feature of the proposed solution is that the computed sets can also be used to efficiently implement a data-driven dual-mode receding horizon controller. Future studies will be devoted to mitigating the conservativeness of the proposed inner approximation for the ROSC sets. 

\bibliographystyle{IEEEtran}
\bibliography{bibliography} 

\begin{thebibliography}{10}
\providecommand{\url}[1]{#1}
\csname url@samestyle\endcsname
\providecommand{\newblock}{\relax}
\providecommand{\bibinfo}[2]{#2}
\providecommand{\BIBentrySTDinterwordspacing}{\spaceskip=0pt\relax}
\providecommand{\BIBentryALTinterwordstretchfactor}{4}
\providecommand{\BIBentryALTinterwordspacing}{\spaceskip=\fontdimen2\font plus
\BIBentryALTinterwordstretchfactor\fontdimen3\font minus
  \fontdimen4\font\relax}
\providecommand{\BIBforeignlanguage}[2]{{%
\expandafter\ifx\csname l@#1\endcsname\relax
\typeout{** WARNING: IEEEtran.bst: No hyphenation pattern has been}%
\typeout{** loaded for the language `#1'. Using the pattern for}%
\typeout{** the default language instead.}%
\else
\language=\csname l@#1\endcsname
\fi
#2}}
\providecommand{\BIBdecl}{\relax}
\BIBdecl

\bibitem{mitchell2007comparing}
I.~M. Mitchell, ``Comparing forward and backward reachability as tools for
  safety analysis,'' in \emph{Hybrid Systems: Computation and Control},
  A.~Bemporad, A.~Bicchi, and G.~Buttazzo, Eds.\hskip 1em plus 0.5em minus
  0.4em\relax Berlin, Heidelberg: Springer Berlin Heidelberg, 2007, pp.
  428--443.

\bibitem{althoff2010reachability}
M.~Althoff, ``Reachability analysis and its application to the safety
  assessment of autonomous cars,'' Ph.D. dissertation, Technische
  Universit{\"a}t M{\"u}nchen, 2010.

\bibitem{bertsekas1972infinite}
D.~Bertsekas, ``Infinite time reachability of state-space regions by using
  feedback control,'' \emph{IEEE Trans. on Automatic Control}, vol.~17, no.~5,
  pp. 604--613, 1972.

\bibitem{blanchini2008set}
F.~Blanchini and S.~Miani, \emph{Set-theoretic methods in control}.\hskip 1em
  plus 0.5em minus 0.4em\relax Springer, 2008, vol.~78.

\bibitem{berberich2020data}
J.~Berberich, J.~K{\"o}hler, M.~A. M{\"u}ller, and F.~Allg{\"o}wer,
  ``Data-driven model predictive control with stability and robustness
  guarantees,'' \emph{IEEE Trans. on Automatic Control}, vol.~66, no.~4, pp.
  1702--1717, 2020.

\bibitem{alanwar2022robust}
A.~Alanwar, Y.~St{\"u}rz, and K.~H. Johansson, ``Robust data-driven predictive
  control using reachability analysis,'' \emph{European Journal of Control}, p.
  100666, 2022.

\bibitem{gupta2023computation}
A.~Gupta, M.~Mejari, P.~Falcone, and D.~Piga, ``Computation of parameter
  dependent robust invariant sets for lpv models with guaranteed performance,''
  \emph{Automatica}, vol. 151, p. 110920, 2023.

\bibitem{angeli2008ellipsoidal}
D.~Angeli, A.~Casavola, G.~Franz{\`e}, and E.~Mosca, ``An ellipsoidal off-line
  mpc scheme for uncertain polytopic discrete-time systems,''
  \emph{Automatica}, vol.~44, no.~12, pp. 3113--3119, 2008.

\bibitem{yang2021scalable}
L.~Yang and N.~Ozay, ``Scalable zonotopic under-approximation of backward
  reachable sets for uncertain linear systems,'' \emph{IEEE Control Systems
  Letters}, vol.~6, pp. 1555--1560, 2021.

\bibitem{kurzhanski1997ellipsoidal}
A.~Kurzhanski and I.~V{\'a}lyi, \emph{Ellipsoidal calculus for estimation and
  control}.\hskip 1em plus 0.5em minus 0.4em\relax Springer, 1997.

\bibitem{scott2016constrained}
J.~K. Scott, D.~M. Raimondo, G.~R. Marseglia, and R.~D. Braatz, ``Constrained
  zonotopes: A new tool for set-based estimation and fault detection,''
  \emph{Automatica}, vol.~69, pp. 126--136, 2016.

\bibitem{althoff2015introduction}
M.~Althoff, ``An introduction to cora 2015,'' in \emph{Workshop on applied
  verification for continuous and hybrid systems}, 2015, pp. 120--151.

\bibitem{girard2005reachability}
A.~Girard, ``Reachability of uncertain linear systems using zonotopes,'' in
  \emph{Int. Workshop on Hybrid Systems: Computation and Control}.\hskip 1em
  plus 0.5em minus 0.4em\relax Springer, 2005, pp. 291--305.

\bibitem{alanwar2021data}
A.~Alanwar, A.~Koch, F.~Allg{\"o}wer, and K.~H. Johansson, ``Data-driven
  reachability analysis from noisy data,'' \emph{arXiv preprint
  arXiv:2105.07229}, 2021.

\bibitem{devonport2020data}
A.~Devonport and M.~Arcak, ``Data-driven reachable set computation using
  adaptive gaussian process classification and monte carlo methods,'' in
  \emph{American Control Conf.}\hskip 1em plus 0.5em minus 0.4em\relax IEEE,
  2020, pp. 2629--2634.

\bibitem{de2019formulas}
C.~De~Persis and P.~Tesi, ``Formulas for data-driven control: Stabilization,
  optimality, and robustness,'' \emph{IEEE Trans. on Automatic Control},
  vol.~65, no.~3, pp. 909--924, 2019.

\bibitem{willems2005note}
J.~C. Willems, P.~Rapisarda, I.~Markovsky, and B.~L. De~Moor, ``A note on
  persistency of excitation,'' \emph{Systems \& Control Letters}, vol.~54,
  no.~4, pp. 325--329, 2005.

\bibitem{mortari1997n}
D.~Mortari, ``n-dimensional cross product and its application to the matrix
  eigenanalysis,'' \emph{Journal of Guidance, Control, and Dynamics}, vol.~20,
  no.~3, pp. 509--515, 1997.

\bibitem{althoff2010computing}
M.~Althoff, O.~Stursberg, and M.~Buss, ``Computing reachable sets of hybrid
  systems using a combination of zonotopes and polytopes,'' \emph{Nonlinear
  analysis: hybrid systems}, vol.~4, no.~2, pp. 233--249, 2010.

\bibitem{althoff2015computing}
{M. Althoff}, ``On computing the minkowski difference of zonotopes,''
  \emph{arXiv preprint arXiv:1512.02794}, 2015.

\bibitem{lucia2022supervisor}
W.~Lucia, G.~Franz{\`e}, and B.~Sinopoli, ``A supervisor-based control
  architecture for constrained cyber-physical systems subject to network
  attacks,'' \emph{IEEE Trans. on Control of Network Systems}, 2022.

\bibitem{rakovic2006reachability}
S.~V. Rakovic, E.~C. Kerrigan, D.~Q. Mayne, and J.~Lygeros, ``Reachability
  analysis of discrete-time systems with disturbances,'' \emph{IEEE Trans. on
  Automatic Control}, vol.~51, no.~4, pp. 546--561, 2006.

\bibitem{koch2021provably}
A.~Koch, J.~Berberich, and F.~Allg{\"o}wer, ``Provably robust verification of
  dissipativity properties from data,'' \emph{IEEE Trans. on Automatic
  Control}, vol.~67, no.~8, pp. 4248--4255, 2021.

\bibitem{borrelli2017predictive}
F.~Borrelli, A.~Bemporad, and M.~Morari, \emph{Predictive control for linear
  and hybrid systems}.\hskip 1em plus 0.5em minus 0.4em\relax Cambridge
  University Press, 2017.

\bibitem{kochdumper2019representation}
N.~Kochdumper and M.~Althoff, ``Representation of polytopes as polynomial
  zonotopes,'' \emph{arXiv preprint arXiv:1910.07271}, 2019.

\bibitem{chen2021data}
Y.~Chen and N.~Ozay, ``Data-driven computation of robust control invariant sets
  with concurrent model selection,'' \emph{IEEE Trans. on Control Systems
  Technology}, vol.~30, no.~2, pp. 495--506, 2021.

\bibitem{rakovic2005invariant}
S.~V. Rakovic, E.~C. Kerrigan, K.~I. Kouramas, and D.~Q. Mayne, ``Invariant
  approximations of the minimal robust positively invariant set,'' \emph{IEEE
  Trans. on automatic control}, vol.~50, no.~3, pp. 406--410, 2005.

\end{thebibliography}

\end{document}